\renewcommand{\thispagestyle}[1]{} 
\theoremstyle{definition}
\newtheorem{theorem}{Theorem}[section]
\newtheorem{lemma}[theorem]{Lemma}
\newtheorem{definition}{Definition}
\def\tSTPC{t_{\mbox{\tiny STPC}}}
\def\tRWPC{t_{\mbox{\tiny RWPC}}}
\def\tSTLI{t_{\mbox{\tiny STLI}}}
\def\tMILI{t_{\mbox{\tiny MILI}}}
\def\hSTPC{h_{\mbox{\tiny STPC}}}
\def\hRWPC{h_{\mbox{\tiny RWPC}}}
\begin{document}

\pagestyle{fancy}
\IEEEoverridecommandlockouts

\lhead{\textit{Technical Report, Dept. of EEE, Imperial College, London, UK, Nov., 2012.}}
\rhead{} 
%
\title{Efficient Identification of Additive Link Metrics: Theorem Proof and Evaluations}
\author{\IEEEauthorblockN{Liang Ma\IEEEauthorrefmark{2}, Ting He\IEEEauthorrefmark{3}, Kin K. Leung\IEEEauthorrefmark{2}, Don Towsley\IEEEauthorrefmark{1}, and Ananthram Swami\IEEEauthorrefmark{4}}
\IEEEauthorblockA{\IEEEauthorrefmark{2}Imperial College, London, UK. Email: \{l.ma10, kin.leung\}@imperial.ac.uk\\
\IEEEauthorrefmark{3}IBM T. J. Watson Research Center, Yorktown, NY, USA. Email: the@us.ibm.com\\
\IEEEauthorrefmark{4}Army Research Laboratory, Adelphi, MD, USA. Email: ananthram.swami.civ@mail.mil\\
\IEEEauthorrefmark{1}University of Massachusetts, Amherst, MA, USA. Email: towsley@cs.umass.edu
}
%
}

\maketitle

\IEEEpeerreviewmaketitle

\section{Introduction}
As described in \cite{LiangPathConstruction12}, algorithm STPC (Spanning Tree-based Path Construction) builds three independent spanning trees \cite{GraphTheory2005} ($\mathcal{T}_1$, $\mathcal{T}_2$ and $\mathcal{T}_3$) wrt node $r$ in the 3-vertex-connected graph $\mathcal{G}^*_{ex}$ ($\mathcal{G}^*_{ex}$ is obtained by adding virtual nodes and virtual links to the original graph $\mathcal{G}$) for constructing a sufficient number of simple paths such that all links in $\mathcal{G}$ can be identified. Based on these three independent spanning trees, STPC constructs 3 internally vertex disjoint paths wrt each node in $\mathcal{G}$, thus resulting in $3|V|$ paths totally, where $V$ is the set of nodes in $\mathcal{G}$. However, there exists redundant information in these $3|V|$ paths since some of them are exactly the same. The main objective of this report is to prove that the number of distinct paths constructed by STPC in fact equals $|L|$, where $L$ is the set of links in $\mathcal{G}$. By linear algebra, we know that the minimum number of the required path measurements for full link identifications is $|L|$ and all these paths must be linearly independent. Therefore, STPC is characterized by high efficiency in that all constructed distinct paths are linearly independent, i.e., all link metrics in $\mathcal{G}$ can be identified without conducting any unnecessary path measurements. In the end, we show some additional simulation results to verify the superior efficiency of STPC and STLI. The terms and notations used in this report are defined in Table \ref{t notion}.

\begin{table}[b]
\renewcommand{\arraystretch}{1.3}
\caption{Main Notations} \label{t notion}
\vspace{-.5em}
\centering
\begin{tabular}{r|m{6.5cm}}
  \hline
  \textbf{Symbol} & \textbf{Meaning} \\
  \hline
  $V(\mathcal{G})$, $L(\mathcal{G})$ & set of nodes/links in graph $\mathcal{G}$\\
  \hline
  $m$, $n$ & number of nodes/links in $\mathcal{G}$ \\
  \hline
  $\mathcal{G}+l$ & add a link: $\mathcal{G}+l=(V(\mathcal{G}),L(\mathcal{G})\cup \{l\})$, where the end-points of link $l$ are in $V(\mathcal{G})$\\
 \hline
  $\mathcal{G} \setminus \mathcal{G}^{'}$ & From $\mathcal{G}$, delete all nodes in common with $\mathcal{G}'$ and their incident links \\
  \hline
   $\mathcal{G} \cup \mathcal{G}^{'} $ & graph union: $ \mathcal{G} \cup \mathcal{G}^{'}=(V(\mathcal{G}) \cup V(\mathcal{G}'), L(\mathcal{G}) \cup L(\mathcal{G}'))$ \looseness=-1\\
  \hline
  $\mathcal{P}$ & simple path, defined as a graph with $V(\mathcal{P})=\{v_0,\ldots,v_k\}$ and $L(\mathcal{P})=\{v_0v_1, v_1v_2,\ldots,v_{k-1}v_k\}$, where $v_0,\ldots,v_k$ are distinct nodes\\
  \hline
  ${\mu}_i$ & ${\mu}_i\in V(\mathcal{G})$ is the $i$-th monitor in $\mathcal{G}$ \\
  \hline
  $w_l$, $c_{\mathcal{P}}$ &  metric of link $l$, sum metric of path $\mathcal{P}$ \looseness=-1\\
  \hline

\end{tabular}
\vspace{-0mm}
\end{table}

\section{Independent Spanning Trees}
With respect to $\mathcal{G}^*_{ex}$, let $\mathcal{P}_{STPC}$ denote the path set obtained by running STPC algorithm. As discussed before, the goal is to prove that the number of distinct paths in $\mathcal{P}_{STPC}$ is the number of links in $\mathcal{G}$, i.e., $|\mathcal{P}_{STPC}|=|L|$. For this purpose, we first briefly discuss how to construct the 3 independent spanning trees, some unique features of which will be used in later proofs.

\subsection{Three Independent Spanning Trees}
In this section, we consider how to construct three independent spanning trees wrt node $r$ in $\mathcal{G}^*_{ex}$.

\begin{definition} (\cite{Cheriyan88})
\emph{Nonseparating ear decomposition:} Nonseparating ear decomposition of $\mathcal{G}$ is a decomposition $V(\mathcal{G})=V(\mathcal{E}_1\cup \mathcal{E}_2 \cup \cdots \cup \mathcal{E}_{n_e})$ (each $\mathcal{E}_i$ is called an \emph{ear}, $n_e$ ears in total) such that
\begin{enumerate}
  \item $\mathcal{E}_1$ is an induced cycle,
  \item $\mathcal{E}_i$ ($2\leq i \leq n_e$) is an induced simple path with only its end-points in common with $\mathcal{E}_1 \cup \cdots \cup \mathcal{E}_{i-1}$,
  \item Let $\overline{\mathcal{G}_i}:=\mathcal{G}\setminus (\mathcal{E}_1\cup \mathcal{E}_2 \cup \cdots \cup \mathcal{E}_{i})$. For each $\mathcal{E}_i$, $\overline{\mathcal{G}_i}$ is connected and each internal node of $\mathcal{E}_i$ has a neighbor in $\overline{\mathcal{G}_i}$, and
  \item $|\mathcal{E}_{n_e}|=3$ and the internal node of $\mathcal{E}_{n_e}$ does not appear in other ears.
\end{enumerate}
\end{definition}

\begin{definition} (\cite{Even76})
\emph{s-t numbering:} s-t numbering of $\mathcal{G}$ is a one-one function $f:\ V\rightarrow [1\cdots n]$ ($n=|V|$) such that
\begin{enumerate}
  \item $f(s)=1$, $f(t)=n$, and
  \item For each vertex $v$ in $\mathcal{G}\setminus \{s,t\}$, it has a neighbor $u$ with $f(u)<f(v)$ and a neighbor $w$ with $f(w)>f(v)$.
\end{enumerate}
\end{definition}

Suppose $\mathcal{E}_j$ is the first ear with node $v$ on it, then $j$ is called the \emph{ear level} of $v$, denoted by $g(v)=j$. It is proved in \cite{Cheriyan88} that a 3-vertex-connected graph $\mathcal{G}^*_{ex}$ has a nonseparation ear decomposition, by which each node in $\mathcal{G}^*_{ex}$ can be assigned an ear level and an s-t number. With the knowledge of the ear level and the s-t number of each node, three independent spanning trees can be constructed accordingly. Without loss of generality, in the sequel, we assume ${\mu}_1$ is a non-cutvertex monitor\footnote{A non-cutvetex monitor can be found since it it impossible that all monitors are cutvertices in an identifiable network according to the minimum monitor placement algorithm MMP \cite{MaInfocomOnline}.} in $\mathcal{G}$ and virtual node $r$ connects to ${\mu}_2$, i.e., in the definition \cite{LiangPathConstruction12} of $\mathcal{G}^*_{ex}$, ${\mu}_i$ equals ${\mu}_2$. Due to the special structure of $\mathcal{G}^*_{ex}$, the complicated algorithm  \cite{Cheriyan88} of finding $\mathcal{E}_1$ can be avoided, i.e., simply choose $r{\mu'}_1{\mu}_1{\mu}'_2r$ as $\mathcal{E}_1$. Then the rest ears can be found by the ear decomposition algorithm described in the proof of Theorem 1 \cite{Cheriyan88}. To construct 3 independent spanning trees, one rule for selecting the last ear $\mathcal{E}_{n_e}$ is that its internal node must be ${\mu}_2$. One preferable property of nonseparation ear decomposition is that the s-t numbers can be naturally computed in the process of selecting each ear, following Algorithm \ref{alg_stNum}.

\begin{algorithm}[tb]
\small
\SetKwInOut{Input}{input}\SetKwInOut{Output}{output}
\Input{Ear decomposition of graph $\mathcal{G}$}
\Output{s-t number of each node in $\mathcal{G}$}
\ForEach {ear $\mathcal{E}_i$}
    {
    \eIf {i=1}
        {
        $f(r)=1$, $f({\mu}'_2)=2$, $f({\mu}_1)=3$, $f({\mu}'_1)=4$\;
        }
        {
        Suppose the end-points of $\mathcal{E}_i$ are $v_1$ and $v_2$ with $f(v_1)<f(v_2)$ ($\eta = f(v_2)$) and $\psi = |\mathcal{E}_i|-2$, then\\
  (i) for each existing node $w$ with $f(w)\geq\eta$, $f(w)\leftarrow f(w)+\psi$, and\\
  (ii) following the direction from $v_1$ to $v_2$ on $\mathcal{E}_i$, all internal nodes of $\mathcal{E}_i$ are sequentially numbered from $\eta$ to $\eta+\psi-1$.
        }
    }
\caption{s-t numbering of $\mathcal{G}$}\label{alg_stNum}
\vspace{-.25em}
\end{algorithm}
\normalsize

By running Algorithm \ref{alg_stNum}, the final s-t numbers for $r$ and ${\mu}'_1$ are $1$ and $n$ (the number of nodes in $\mathcal{G}^*_{ex}$), respectively. With the computed ear level and s-t numbers of a given 3-vertex-connected graph $\mathcal{G}^*_{ex}$, it is easy to show that each node $v$ in $\mathcal{G}^*_{ex}\setminus \{r,{\mu}'_1,{\mu}_2\}$ has three neighbors $w_1$, $w_2$ and $w_3$ such that
\begin{enumerate}
  \item ear level $g(w_1)>g(v)$;
  \item $f(w_2)<f(v)$ and $g(w_2)\leq g(v)$, and
  \item $f(w_3)>f(v)$ and $g(w_3)\leq g(v)$.
\end{enumerate}

The above three properties are also the three rules to construct the corresponding independent spanning trees:
\begin{itemize}
  \item $\mathcal{T}_1$: First, add link $r\mu_2$ to $\mathcal{T}_1$. Next, for each node $v$ in $\mathcal{G}_m\setminus \{r, {\mu}_2\}$, $\mathcal{T}_1$ involves link $vw$, where $w$ is a neighbor node of $v$ with $g(w)>g(v)$;
  \item $\mathcal{T}_2$: First, add link $r{\mu'}_2$ to $\mathcal{T}_2$. Next, for each node $v$ in $\mathcal{G}_m\setminus \{r, {\mu}'_2\}$, $\mathcal{T}_2$ involves link $vw$, where $w$ is a neighbor node of $v$ with $f(w)<f(v)$ and $g(w)\leq g(v)$;
  \item $\mathcal{T}_3$: First, add link $r{\mu'}_1$ to $\mathcal{T}_3$. Next, for each node $v$ in $\mathcal{G}_m\setminus \{r, {\mu}'_1\}$, $\mathcal{T}_3$ involves link $vw$, where $w$ is a neighbor node of $v$ with $f(w)>f(v)$ and $g(w)\leq g(v)$.
\end{itemize}

Accordingly, three independent spanning trees $\mathcal{T}_1$, $\mathcal{T}_2$ and $\mathcal{T}_3$ are constructed in $\mathcal{G}^*_{ex}$.

\section{Number of Distinct Paths Constructed by STPC}
With the constructed three independent spanning trees, let $\mathcal{G}_m:=\mathcal{T}_1\cup \mathcal{T}_2\cup \mathcal{T}_3$. Then each node in $\mathcal{G}_{m}\setminus r$ (virtual nodes ${\mu}'_1$ and ${\mu}'_2$ are also included) has three internally vertex disjoint paths (denoted by $\mathcal{X}^{(1)}$, $\mathcal{X}^{(2)}$ and $\mathcal{X}^{(3)}$) to $r$, each along the corresponding spanning tree. Combining any two of these three paths, we get three cycles, i.e., $\mathcal{C}^{(1)}=\mathcal{X}^{(1)}\cup \mathcal{X}^{(2)}$, $\mathcal{C}^{(2)}=\mathcal{X}^{(2)}\cup \mathcal{X}^{(3)}$ and $\mathcal{C}^{(3)}=\mathcal{X}^{(1)}\cup \mathcal{X}^{(3)}$. Let $C$ be the set of all these cycles. To count the number of distinct paths constructed by STPC, we first count the number of distinct cycles in $C$. Next, removing all virtual links and the resulting isolated nodes in each cycle $\mathcal{C}_i$ of $C$, one monitor-to-monitor simple path $\mathcal{Y}_i$ (if $||\mathcal{Y}_i||\geq 1$) is formed, thus generating a path set ${Y}$. We then study the number of linearly independent paths in ${Y}$ and yield path set ${Y}'$ by removing the linearly dependent paths in ${Y}$. Finally, we show that for each path $\mathcal{Y}'_i$ with more than 2 monitors in ${Y}'$, it can be shortened to a path, denoted by $\mathcal{Z}_i$, with only 2 monitors while retaining the property of full tree link identification in $\mathcal{G}^*_{ex}$. We can prove path set $\{\mathcal{Z}_i\}$ has $||\mathcal{G}_m||-|\mathbb{V}|$ paths ($\mathbb{V}$ is the set of virtual links in $\mathcal{G}_m$), all of which are linearly independent and can cover all paths constructed by STPC for tree link identification in $\mathcal{G}$. Therefore, for tree link identification, the number of distinct paths constructed by STPC is exactly the number of tree links. In the end, we show that the auxiliary algorithm of STPC constructs one path for each non-tree link in $\mathcal{G}$, thus completing the proof of Theorem IV.2 in \cite{LiangPathConstruction12} as a link in $\mathcal{G}$ is either a tree or non-tree link.

\subsection{Number of Distinct Cycles}
\label{sec_numDisCycl}
Let $\mathcal{X}^{(1)}_v$, $\mathcal{X}^{(2)}_v$ and $\mathcal{X}^{(3)}_v$ denote three internally vertex disjoint paths from $v$ ($v\in(\mathcal{G}_m\setminus r)$) to $r$ along the corresponding independent spanning trees. Combining any two of these paths, node $v$ is associated with three cycles, i.e., $\mathcal{C}^{(1)}_v=\mathcal{X}^{(1)}_v\cup \mathcal{X}^{(2)}_v$, $\mathcal{C}^{(2)}_v=\mathcal{X}^{(2)}_v\cup \mathcal{X}^{(3)}_v$ and $\mathcal{C}^{(3)}_v=\mathcal{X}^{(1)}_v\cup \mathcal{X}^{(3)}_v$. Therefore, there are $3(|\mathcal{G}_m|-1)$ cycles, forming cycle set $C$. In this section, we investigate on counting the distinct cycles in $C$. For illustrative purpose, we color the links on $\mathcal{T}_1$ as blue, $\mathcal{T}_2$ as green, and $\mathcal{T}_3$ as red in the sequel. Since the independent trees are constructed wrt $r$, each link in $\mathcal{T}_1$, $\mathcal{T}_2$ and $\mathcal{T}_3$ has a natural direction toward $r$ when selecting $\mathcal{X}^{(i)}_v$ ($i=1,2,3$). Therefore, for the simplicity of the following proofs, each link in $\mathcal{T}_1$, $\mathcal{T}_2$ and $\mathcal{T}_3$ is assigned a direction toward $r$. Note this direction is for and only for the theorem proof and only exists on $\mathcal{T}_1$, $\mathcal{T}_2$ and $\mathcal{T}_3$, meaning the assumption that the original graph is undirected still holds. With this notation, links $l_1$ on $\mathcal{T}_i$ and $l_2$ on $\mathcal{T}_j$ ($i\neq j$) with different colors and different directions might correspond to the same link in $\mathcal{G}_m$.

\begin{lemma}
The number of distinct cycles in $C$ is the number of links (both real and virtual) in $\mathcal{G}_m$, i.e., $|C|=||\mathcal{G}_m||$. Furthermore, all these distinct cycles are linearly independent.
\end{lemma}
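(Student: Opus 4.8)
The plan is to prove the two assertions together by establishing (I) an upper bound $|C|\le||\mathcal{G}_m||$, and (II) that every link metric $w_l$ can be written as a linear combination of the cycle sums $c_{\mathcal{C}}$ ($\mathcal{C}\in C$). Claim~(II) says the matrix whose rows are the incidence vectors of the distinct cycles of $C$ has full column rank $||\mathcal{G}_m||$, so $|C|\ge||\mathcal{G}_m||$; with~(I) this forces $|C|=||\mathcal{G}_m||$, and $||\mathcal{G}_m||$ vectors spanning an $||\mathcal{G}_m||$-dimensional space are automatically linearly independent, which is the second sentence of the lemma. I will orient every tree link toward $r$ as in the excerpt; then for each $v\neq r$ the root-paths $\mathcal{X}^{(1)}_v,\mathcal{X}^{(2)}_v,\mathcal{X}^{(3)}_v$ meet pairwise only at $v$ and $r$, hence are pairwise edge-disjoint, and each $\mathcal{C}^{(k)}_v$ is the edge-disjoint union of its two monochromatic arcs. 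All of the following is for the ``generic'' vertices; the constantly many special vertices among $\{r,\mu_1,\mu_2,\mu'_1,\mu'_2\}$, where two root-paths may coincide (so $\mathcal{C}^{(k)}_v$ degenerates), I would handle separately by inspection using the explicit first ear $\mathcal{E}_1=r\mu'_1\mu_1\mu'_2 r$ and the prescribed last ear (internal node $\mu_2$).

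For~(I): let $d(l)\in\{1,2,3\}$ be the number of the three trees containing link $l$; since each spanning tree has $|\mathcal{G}_m|-1$ links, $\sum_l d(l)=3(|\mathcal{G}_m|-1)$. A link shared by two trees must be oriented oppositely in them (otherwise the two root-paths starting at its lower endpoint would share an internal vertex), so with three trees two of the three orientations would agree --- impossible (up to a degeneracy at $r$ that only concerns the special vertices); hence $d(l)\le 2$. Since $\sum_{i<j}||\mathcal{T}_i\cap\mathcal{T}_j||=\sum_l\binom{d(l)}{2}$ counts exactly the degree-$2$ links, $||\mathcal{G}_m||=\sum_l\big(d(l)-\binom{d(l)}{2}\big)=3(|\mathcal{G}_m|-1)-\sum_{i<j}||\mathcal{T}_i\cap\mathcal{T}_j||$. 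The crux is a \emph{key lemma}: for each pair $\{i,j\}$, two cycles built from trees $i$ and $j$ coincide iff their generating vertices lie in the same connected component of the forest $\mathcal{T}_i\cap\mathcal{T}_j$. The ``if'' part is a one-edge induction --- if $e=xy\in\mathcal{T}_i\cap\mathcal{T}_j$ with $x$ the $\mathcal{T}_i$-parent of $y$ and $y$ the $\mathcal{T}_j$-parent of $x$, then $\mathcal{X}^{(i)}_y=\{e\}\cup\mathcal{X}^{(i)}_x$ and $\mathcal{X}^{(j)}_x=\{e\}\cup\mathcal{X}^{(j)}_y$, so the cycles from $x$ and from $y$ are both $\{e\}\cup\mathcal{X}^{(i)}_x\cup\mathcal{X}^{(j)}_y$. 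For the ``only if'' part I would use that $r$ is a leaf of $\mathcal{T}_1$ ($g(r)$ minimal, so no link is oriented into $r$ there) and of $\mathcal{T}_3$ ($f(r)$ minimal), so every pair $\{i,j\}$ contains a tree $\mathcal{T}_i$ with a unique link $a$ at $r$: if the cycles from two vertices coincide in a cycle $\mathcal{C}$, their $\mathcal{T}_i$-arcs both leave $r$ via $a$, so one contains the other, and the arc between the two vertices that is their difference lies in $\mathcal{T}_i$ (inside the longer $\mathcal{T}_i$-arc) and in $\mathcal{T}_j$ (it is the complement in $\mathcal{C}$ of the shorter $\mathcal{T}_i$-arc, hence a $\mathcal{T}_j$-arc), i.e.\ in $\mathcal{T}_i\cap\mathcal{T}_j$. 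Since $r$ is isolated in the forest $\mathcal{T}_i\cap\mathcal{T}_j$ ($|\mathcal{G}_m|$ vertices, $||\mathcal{T}_i\cap\mathcal{T}_j||$ edges), the $|\mathcal{G}_m|-1$ cycles for the pair $\{i,j\}$ take exactly $(|\mathcal{G}_m|-1)-||\mathcal{T}_i\cap\mathcal{T}_j||$ distinct values; summing over the three pairs, and allowing that one cycle might be produced by two pairs, $|C|\le\sum_{i<j}\big((|\mathcal{G}_m|-1)-||\mathcal{T}_i\cap\mathcal{T}_j||\big)=||\mathcal{G}_m||$.

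For~(II): edge-disjointness of the two arcs gives $c_{\mathcal{C}^{(1)}_v}=c_{\mathcal{X}^{(1)}_v}+c_{\mathcal{X}^{(2)}_v}$ and its cyclic analogues, so $c_{\mathcal{X}^{(1)}_v}=\tfrac{1}{2}\big(c_{\mathcal{C}^{(1)}_v}+c_{\mathcal{C}^{(3)}_v}-c_{\mathcal{C}^{(2)}_v}\big)$ and likewise for $\mathcal{X}^{(2)}_v,\mathcal{X}^{(3)}_v$ --- every tree-root-path sum is a fixed linear combination of cycle sums. For a link $l=uv\in\mathcal{T}_i$ with $v$ the parent of $u$ one has $\mathcal{X}^{(i)}_u=\{l\}\cup\mathcal{X}^{(i)}_v$ (disjoint union), hence $w_l=c_{\mathcal{X}^{(i)}_u}-c_{\mathcal{X}^{(i)}_v}$ (with $c_{\mathcal{X}^{(i)}_r}:=0$), again a linear combination of cycle sums; as $\mathcal{G}_m=\mathcal{T}_1\cup\mathcal{T}_2\cup\mathcal{T}_3$, every $w_l$ is recovered this way, proving~(II). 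Then $|C|\ge||\mathcal{G}_m||$, so with~(I) $|C|=||\mathcal{G}_m||$; the cycle--link incidence matrix is then square and has a left inverse, hence is invertible, and the distinct cycles of $C$ are linearly independent.

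I expect the main obstacle to be the ``only if'' half of the key lemma: turning the bare coincidence of two cycles into the structural fact that their generating vertices are joined by a path inside $\mathcal{T}_i\cap\mathcal{T}_j$. It hinges on $r$ being a leaf of $\mathcal{T}_1$ and of $\mathcal{T}_3$, which anchors the monochromatic arcs of every cycle at a fixed edge through $r$. Checking this leaf property, and the non-degeneracy of all the $\mathcal{C}^{(k)}_v$, for the special vertices $r,\mu_1,\mu_2,\mu'_1,\mu'_2$ under the prescribed first and last ears is the one genuinely tedious piece.
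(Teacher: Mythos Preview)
Your proof is correct in outline and takes a genuinely different route from the paper's. The paper counts $|C|$ \emph{directly from the ear decomposition}: it shows that within each ear all red--green cycles coincide (so that pair contributes exactly $n_e$ distinct cycles, one per ear), while blue--red and blue--green cycles from consecutive internal ear-vertices collapse only when the relevant end-link of the ear is also blue; summing the ear contributions and using the bookkeeping identities $b_1+b_2=|\mathcal{G}_m|-1$ and $||\mathcal{G}_m||=b_1+|\mathcal{G}_m|+n_e-1$ yields $|C|=||\mathcal{G}_m||$. You instead abstract the ears away entirely: your key lemma characterises coincidence of two same-pair cycles by the connected components of the forest $\mathcal{T}_i\cap\mathcal{T}_j$, so each pair contributes $(|\mathcal{G}_m|-1)-||\mathcal{T}_i\cap\mathcal{T}_j||$ distinct cycles, and summing over pairs gives $3(|\mathcal{G}_m|-1)-\sum_{i<j}||\mathcal{T}_i\cap\mathcal{T}_j||=||\mathcal{G}_m||$ via inclusion--exclusion on link multiplicities. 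The linear-independence half is essentially the same in both proofs---every link metric is recoverable from the cycle sums---though the paper defers this to the STLI algorithm while you spell the linear algebra out. Your argument is more conceptual and would apply to any three independent spanning trees with $r$ a leaf in at least two of them; the paper's is tied to the specific construction but needs no auxiliary lemma. The special-vertex inspection you flag is indeed the only delicate spot; note in particular that $r$ need not be isolated in $\mathcal{T}_1\cap\mathcal{T}_2$ (for instance $\mu_2$ may pick $r$ as its $\mathcal{T}_2$-parent), but then the component of $r$ produces only degenerate ``cycles'' and the count $(|\mathcal{G}_m|-1)-||\mathcal{T}_i\cap\mathcal{T}_j||$ still comes out as the number of components of $\mathcal{T}_i\cap\mathcal{T}_j$ not containing $r$, so your upper bound survives.
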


\begin{proof}
In $\mathcal{G}_m$, let $b_1$ be the number of links appearing only in $\mathcal{T}_1$ (colored as blue), and $b_2$ be the number of links appearing in two trees: one is $\mathcal{T}_1$ (colored as blue), the other one is either $\mathcal{T}_2$ (colored as green) or $\mathcal{T}_3$ (colored as red). Then we have:
\begin{equation}\label{eq_red_num}
    b_1 + b_2 = |\mathcal{G}_m|-1,
\end{equation}
since $\mathcal{T}_1$ is a spanning graph of $\mathcal{G}_m$. Consider ear $\mathcal{E}_i$. Let $\delta_i$ be the number of newly added nodes to $\mathcal{E}_1\cup\cdots\cup \mathcal{E}_{i-1}$ by $\mathcal{E}_i$. Let $v_{0}\cdots v_{\delta_i+1}$ denote the nodes in $\mathcal{E}_i$, where $v_0$ and $v_{\delta_i+1}$ are end-points already existed in previous ears. Then we can derive the relationship among the number of links, the number of nodes and the number of ears in $\mathcal{G}_m$. Suppose there are $n_e$ ears in total, then $\mathcal{E}_1$ with $\delta_1$ new nodes has $\delta_1$ links since $\mathcal{E}_1$ is a cycle, whereas $\mathcal{E}_i$ ($2\leq i \leq {n_e}$) which is a path with $\delta_i$ new nodes has $\delta_i+1$ new links. In addition to these links, there exist $b_1$ links only appearing in the blue tree and not involving in any ears. Hence, we have

\begin{equation}\label{eq_linkNodeRelation}
\begin{aligned}
    ||\mathcal{G}_m|| & = b_1 +\delta_1 + \sum^{n_e}_{i=2}(\delta_i+1)\\
    &=b_1+\delta_1+\sum^{n_e}_{i=2}\delta_i+\sum^{n_e}_{i=2}1\\
    &=b_1+|\mathcal{G}_m|+{n_e}-1.
\end{aligned}
\end{equation}

To count distinct cycles in $C$, we consider merging the cycles obtained from each pair of independent spanning trees.
\begin{figure}[tb]
\centering
\includegraphics[width=1.8in]{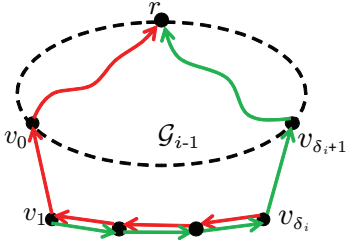}
\caption{Paths generated by merging red and green path segments.} \label{Fig:CombineRedGreen}
\end{figure}

(i) We first consider the cycles obtained by combining the red and green paths. For ear $\mathcal{E}_i$, each newly added node in $\mathcal{E}_i$ corresponds to a cycle obtained by combining the red and green paths. However, as Fig.~\ref{Fig:CombineRedGreen} shows, each link between $v_1$ and $v_{\delta_i}$ corresponds to two colors, red and green, with opposite directions. Therefore, cycles formed by combining the red and green paths wrt $v_1\cdots v_{\delta_i+1}$ are identical. Thus, ear $\mathcal{E}_i$ only contributes one distinct cycle when combining the red and green paths for any of the newly added nodes $v_1\cdots v_{\delta_i}$.
\begin{figure}[tb]
\centering
\includegraphics[width=1.8in]{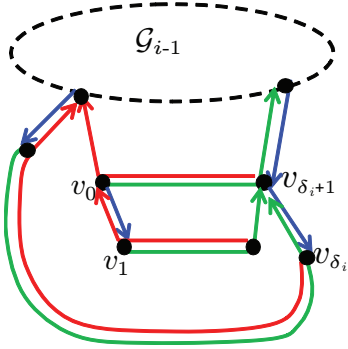}
\caption{Paths generated by merging blue and red/green path segments.} \label{Fig:CombineBlueWithOthers}
\end{figure}

(ii) Next, we consider merging the cycles generated by combining the red and the blue paths. For the $\delta_i$ new nodes in ear $\mathcal{E}_i$, each path formed by combining the red and the blue paths are distinct from each other. However, as Fig.~\ref{Fig:CombineBlueWithOthers} displays, the $red+blue$ cycle wrt $v_1$ might be identical with the $red+blue$ cycle wrt $v_0$ (when $v_0$ was first added in the previous operations). Thus, the same cycle might be counted twice. Let $\varepsilon_i$ indicate if link $v_0v_1$ is colored as blue, i.e., $\varepsilon_i=1$ if true, and $\varepsilon_i=0$ otherwise. With this notation, the number of distinct cycles formed by combining the red and the blue paths is $\delta_i-\varepsilon_i$ for ear $\mathcal{E}_i$ ($2\leq i \leq n_e $), and $\delta_1-1-\varepsilon_1$ for ear $\mathcal{E}_1$.

(iii) Finally, we consider merging the cycles formed by combining the green and the blue paths. Following the same argument in (ii), each path formed by combining the green and the blue paths are distinct from the $\delta_i$ new nodes in ear $\mathcal{E}_i$. However, the cycles wrt to $v_{\delta_i}$ and $v_{\delta_i+1}$ might be identical (see Fig.~\ref{Fig:CombineBlueWithOthers}). Let $\varepsilon'_i$ indicate if link $v_{\delta_i}v_{\delta_i+1}$ is colored blue, i.e., $\varepsilon'_i=1$ if true, and $\varepsilon'_i=0$ otherwise. Accordingly, the number of distinct cycles formed by combining the green and the blue paths is $\delta_i-\varepsilon'_i$ for ear $\mathcal{E}_i$ ($2\leq i \leq n_e$), and $\delta_1-1-\varepsilon'_1$ for ear $\mathcal{E}_1$.

Therefore, the number of distinct cycles, denoted by $Q_i$, contributed by ear $\mathcal{E}_i$ is:
\begin{equation}\label{eq_earContribution}
\begin{aligned}
    Q_1&=1+\delta_1-1-\varepsilon_1+\delta_1-1-\varepsilon'_1,\\
    Q_i&=1+\delta_i-\varepsilon_i+\delta_i-\varepsilon'_i\ \  (2\leq i \leq n_e).
\end{aligned}
\end{equation}

Thus, the number of distinct cycles in $C$ is
\begin{equation}\label{eq_totalCycles}
\begin{aligned}
    |C|&=Q_1+\sum^{n_e}_{i=2}Q_i\\
     &=2\delta_1-\varepsilon_1-\varepsilon'_1-1+2\sum^{n_e}_{i=2}\delta_i+\sum^{n_e}_{i=2}1-\sum^{n_e}_{i=2}(\varepsilon_i+\varepsilon'_i)\\
     &=2|\mathcal{G}_m|+n_e-2-\sum^{n_e}_{i=1}(\varepsilon_i+\varepsilon'_i)\\
     &=2|\mathcal{G}_m|+n_e-2-b_2
\end{aligned}.
\end{equation}

Subtracting (\ref{eq_linkNodeRelation}) from (\ref{eq_totalCycles}) and utilizing (\ref{eq_red_num}), we get
\begin{equation}\label{eq_concCyclNum}
    |C|=||\mathcal{G}_m||.
\end{equation}

Let $\mathcal{G}'_{m}:=\mathcal{G}_m$ except all virtual links/nodes in $\mathcal{G}_m$ are real links/nodes in $\mathcal{G}'_{m}$. Suppose cycle measurement is allowed and $r$ is the only monitor in $\mathcal{G}'_{m}$, then all cycle measurements associated with $C$ is sufficient to identify all links in $\mathcal{G}'_{m}$ (following the same method in STLI, see \cite{LiangPathConstruction12}). Moreover, we have $||\mathcal{G}_m||=||\mathcal{G}'_{m}||$; therefore, all cycles in $C$ are linearly independent.
\end{proof}

\subsection{Number of Linearly Independent Paths After Removing All Virtual Links}
\label{sec_NumLinearIndePaths}
In section \ref{sec_numDisCycl}, we get a cycle set $C$ with $||\mathcal{G}_m||$ linearly independent cycles. For any cycle in $C$, removing all involved virtual links and the resulting isolated nodes, we can obtain a monitor-to-monitor simple path since each cycle in $C$ is obtained by combining two internally vertex disjoint paths ($\mathcal{X}^{(i)}$ and $\mathcal{X}^{(j)}$) ($i,j\in\{1,2,3\}$, $i\neq j$) along two independent spanning trees and virtual links only appear at the end of $\mathcal{X}^{(i)}$ and $\mathcal{X}^{(j)}$. Therefore, following the same operation to each cycle in $C$ iteratively, we can generate a new set ${Y}$ with each element $\mathcal{Y}_i$ representing a monitor-to-monitor simple paths. In this section, we investigate on the number of linearly independent paths in ${Y}$. Note that some paths in ${Y}$ might contain more than 2 monitors. We will show how to shorten these paths in the next section.

Let $\mathbb{V}$ denote the set of virtual links\footnote{$\mathcal{G}_m$ is a spanning graph of $\mathcal{G}^*_{ex}$. Thus, the number of virtual links in $\mathcal{G}_m$ is less than that in $\mathcal{G}^*_{ex}$.} in $\mathcal{G}_m$. The goal is to prove the number of linearly independent paths in ${Y}$ is $||\mathcal{G}_m||-|\mathbb{V}|$, and these linearly independent paths are sufficient to identify all tree links in the original graph $\mathcal{G}$.

For all redundant paths generated by mapping from $C$ to ${Y}$, they can be divided into 3 categories, i.e., 1) trivial topology, 2) linearly dependent paths, and 3) duplicate paths. Note in Fig.~\ref{Fig:TrivialTopo}--Fig.~\ref{Fig:OtherVirtualLinks}, for all paths (or path segments) colored as blue/green/red, they represent virtual links if they are outside $\mathcal{G}$; and real simple paths otherwise.
\begin{figure}[tb]
\centering
\includegraphics[width=1.5in]{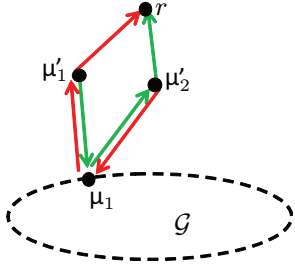}
\caption{Combining red and green paths wrt ${{\mu}_1}$.} \label{Fig:TrivialTopo}
\end{figure}

1) Trivial topology (a graph with no nodes or links). As shown in Fig.~\ref{Fig:TrivialTopo}, combining the red and green paths wrt ${\mu}_1$ ($\mathcal{G}_m$ involves $r{\mu'}_1{\mu}_1{\mu}'_2r$), cycle $r{\mu'}_1{\mu}_1{\mu}'_2r$ is formed. However, the corresponding path of $r{\mu'}_1{\mu}_1{\mu}'_2r$ after removing the virtual links and the resulting isolated nodes is empty. Therefore, this trivial path has no contribution for identifying real links in $\mathcal{G}$.

2) Linearly dependent paths. The following 3 cases can generate redundant linearly dependent paths.
\begin{figure}[tb]
\centering
\includegraphics[width=1.5in]{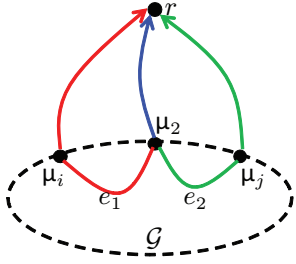}
\caption{Path construction wrt ${{\mu}_2}$.} \label{Fig:LinearDepPaths1}
\end{figure}

(i) Consider the three cycles constructed wrt ${\mu}_2$ (shown in Fig.~\ref{Fig:LinearDepPaths1}). The paths associated with $blue+red$ cycle and $blue+green$ cycle are ${\mu}_ie_1{\mu}_2$ and ${\mu}_2e_2{\mu}_j$, respectively. However, the path associated $red+green$ cycle, ${\mu}_ie_1{\mu}_2e_2{\mu}_j$, is the sum of previously constructed paths ${\mu}_ie_1{\mu}_2$ and ${\mu}_2e_2{\mu}_j$. Therefore, redundant path ${\mu}_ie_1{\mu}_2e_2{\mu}_j$ is linearly dependent with the other two.
\begin{figure}[tb]
\centering
\includegraphics[width=1.8in]{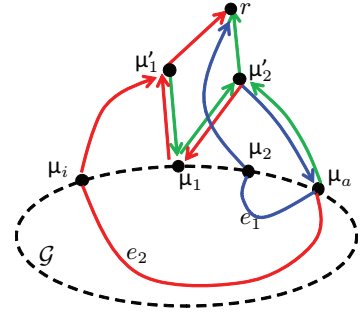}
\caption{Path construction wrt ${{\mu}_a}$.} \label{Fig:LinearDepPaths2}
\end{figure}

(ii) Now consider the case shown in Fig.~\ref{Fig:LinearDepPaths2}. For $\mathcal{G}_m$, there exists one and only one monitor ${\mu}_a$ with link ${\mu}'_2{\mu}_a$ colored as blue in the ${\mu}'_2\rightarrow {\mu}_a$ direction, according to the rules for blue tree construction. Now consider ${\mu}_a$. According to the s-t numbering rule and the processing of ear decomposition, we have $f({\mu}'_2)<f({\mu}_a)$ and $g({\mu}'_2)<g({\mu}_a)$; therefore, ${\mu}'_2{\mu}_a$ is colored\footnote{Note ${\mu}_a$ might have more than one neighbor $w$ with $f(w)<f({\mu}_a)$ and $g(w)<g({\mu}_a)$, in which case we can still choose ${\mu}'_2{\mu}_a$ to color it as green.} as green in the other direction. Then ${\mu}_2e_1{\mu}_a$ and ${\mu}_ie_2{\mu}_a$ are two paths obtained from $green+blue$ cycle and $green+red$ cycle wrt ${\mu}_a$, respectively. However, the path generated by the $blue+red$ cycle is the sum of ${\mu}_2e_1{\mu}_a$ and ${\mu}_ie_2{\mu}_a$, thus redundant in ${Y}$.
\begin{figure}[t]
\centering
\includegraphics[width=1.8in]{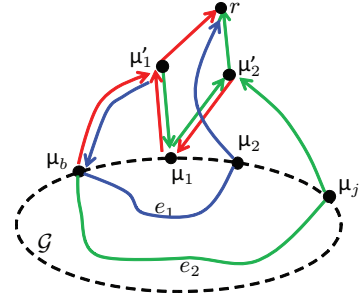}
\caption{Path construction wrt ${{\mu}_b}$.} \label{Fig:LinearDepPaths3}
\end{figure}

(iii) Analogously, there exists one and only one monitor ${\mu}_b$ with link ${\mu}'_1{\mu}_b$ colored as blue in the ${\mu}'_1\rightarrow {\mu}_b$ direction (see Fig.~\ref{Fig:LinearDepPaths3}). Based on the nonseparating ear decomposition, we have $f({\mu}'_1)>f({\mu}_b)$ and $g({\mu}'_1)<g({\mu}_b)$; therefore, ${\mu}'_1{\mu}_b$ is colored\footnote{Note ${\mu}_b$ might have more than one neighbor $u$ with $f(u)>f({\mu}_a)$ and $g(u)<g({\mu}_a)$, in which case we can still choose ${\mu}'_1{\mu}_b$ to color it as red.} as red in the other direction. Then ${\mu}_be_1{\mu}_2$ and ${\mu}_be_2{\mu}_j$ are two paths obtained from $red+blue$ cycle and $red+green$ cycle wrt ${\mu}_b$, respectively. However, the path generated by the $blue+green$ cycle is the sum of ${\mu}_be_1{\mu}_2$ and ${\mu}_be_2{\mu}_j$, thus not providing new information for link identifications in $\mathcal{G}$.

3) Duplicate paths. Each of the following 3 cases can generate the same path twice.
\begin{figure}[tb]
\centering
\includegraphics[width=1.8in]{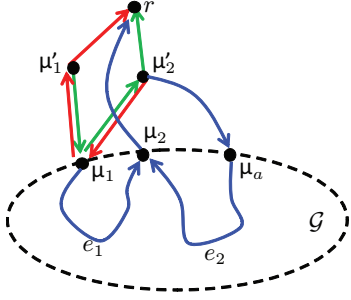}
\caption{Duplicate paths wrt ${{\mu}_1}$ and ${\mu}'_2$.} \label{Fig:DupPath12}
\end{figure}

(i) We have considered combining the red and green paths wrt ${\mu}_1$. Now consider the paths associated with the $blue+red$ cycle and $blue+green$ cycle. As Fig.~\ref{Fig:DupPath12} displays, the paths associated with these two cycles are exactly the same, i.e., path ${\mu}_1e_1{\mu}_2$ is generated twice.

(ii) Following the similar argument, the paths associated with the $blue+red$ cycle and $blue+green$ cycle wrt ${\mu}'_2$ are also the same, i.e., path ${\mu}_2e_2{\mu}_a$ (shown in Fig.~\ref{Fig:DupPath12}) is generated twice when mapping from $C$ to ${Y}$. Note ${\mu}_a$ cannot be the same as ${\mu}_2$ since ${\mu}_2$ only appears in the last ear. Therefore, path ${\mu}_2e_2{\mu}_a$ is non-trivial.
\begin{figure}[tb]
\centering
\includegraphics[width=1.8in]{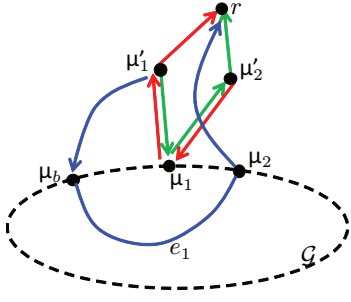}
\caption{Duplicate paths wrt ${\mu}'_1$.} \label{Fig:DupPath3}
\end{figure}

(iii) Similar to (i) and (ii), the paths associated with the $blue+red$ cycle and $blue+green$ cycle wrt ${\mu}'_1$ are also the same, i.e., ${\mu}_be_1{\mu}_2$ as shown in Fig.~\ref{Fig:DupPath3} appears twice when mapping from $C$ to ${Y}$. In addition, ${\mu}_be_1{\mu}_2$ is non-trivial since ${\mu}_2$ has the highest ear level which means ${\mu}_b\neq {\mu}_2$.
\begin{figure}[tb]
\centering
\includegraphics[width=3.3in]{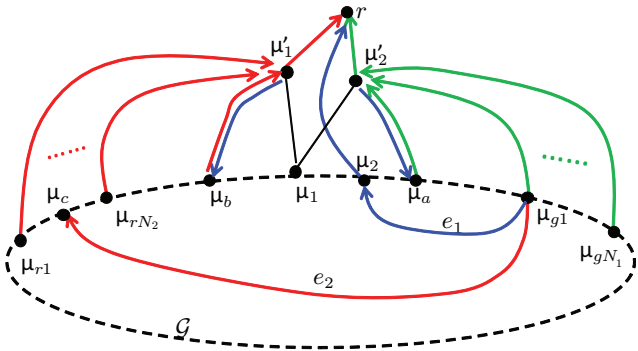}
\caption{Other virtual links connecting to ${\mu}'_1$ or ${\mu}'_2$.} \label{Fig:OtherVirtualLinks}
\end{figure}

4) We have discussed 7 cases in 1)--3), each case providing a redundant path when mapping from $C$ to ${Y}$. For the spanning graph $\mathcal{G}_m$, the minimum number of virtual links in $\mathcal{G}_m$ is also 7, i.e., $r{\mu'}_1$, $r{\mu'}_2$, ${\mu}'_1{\mu}_1$, ${\mu}'_2{\mu}_1$, ${\mu}'_2{\mu}_a$ (Fig.~\ref{Fig:LinearDepPaths2}), and ${\mu}'_1{\mu}_b$ (Fig.~\ref{Fig:LinearDepPaths3}). However, in addition to these 7 virtual links, there might exist other virtual links in $\mathcal{G}_m$. As Fig.~\ref{Fig:OtherVirtualLinks} shows, suppose there are other $N_1$ virtual links (colored as green and no color on the other direction as ${\mu}'_1{\mu}_b$ has been colored as blue) connecting ${\mu}'_2$ and ${\mu}_{gi}$ ($i=1,\cdots,N_1$) and $N_2$ virtual links (colored as red and no color on the other direction as ${\mu}'_1{\mu}_b$ has been colored as blue) connecting ${\mu}'_1$ and ${\mu}_{ri}$ ($i=1,\cdots,N_2$). For the 3 paths constructed wrt each node in $\{{\mu}_{g1},\cdots,{\mu}_{gN_1},{\mu}_{r1},\cdots,{\mu}_{rN_2}\}$, there exists one path which is linearly dependent with the other two. To prove this claim, consider ${\mu}_{g1}$ as an example. Two paths\footnote{Note ${\mu}_c$ can be any node in $\{{\mu}_{r1},\cdots,{\mu}_{rN_2},{\mu}_b\}$ in Fig.~\ref{Fig:OtherVirtualLinks}.} ${\mu}_2e_1{\mu}_{g1}$ and ${\mu}_ce_2{\mu}_{g1}$ (see Fig.~\ref{Fig:OtherVirtualLinks}) can be obtained from the $green+blue$ cycle and $green+red$ cycle wrt ${\mu}_{g1}$. However, the path associated with the $blue+red$ cycle wrt ${\mu}_{g1}$ is the sum of ${\mu}_2e_1{\mu}_{g1}$ and ${\mu}_ce_2{\mu}_{g1}$, thus linearly dependent with the other two. The same argument can be applied to other nodes in $\{{\mu}_{g1},\cdots,{\mu}_{gN_1},{\mu}_{r1},\cdots,{\mu}_{rN_2}\}$. Therefore, we can identify another $N_1+N_2$ linearly dependent paths in ${Y}$ regarding these $N_1+N_2$ virtual links.

In sum, 1)--4) cover all the possible cases of redundant paths when mapping from $C$ to ${Y}$ and the number of these redundant paths is $7+N_1+N_2$, which is also the number of virtual links (denoted by $|\mathbb{V}|$) in $\mathcal{G}_m$. Therefore, removing these redundant paths, the cardinality of the resulting path set ${Y}'$ is $||\mathcal{G}_m||-(7+N_1+N_2)=||\mathcal{G}_m||-|\mathbb{V}|$.

Now we can explore if ${Y}'$ is sufficient to identify all tree links in $\mathcal{G}^*_{ex}$. Recall that each path (for identifying tree links) constructed by STPC consists of one or two path segments, each path segment terminating at the first monitor it encounters. If we let each segment terminate at the last monitor before traversing a virtual link and combine any two segments wrt real node $v$ ($v$ can be either monitor or non-monitor), then a new measurement path set $J$ is formed. It is easy to show (using STLI) that this new path set $J$ is sufficient to identify all tree links in $\mathcal{G}^*_{ex}$. Recall the mapping process from cycle set $C$ to path set ${Y}$. Each path in ${Y}$ is obtained by removing the virtual links and all resulting isolated nodes of the corresponding cycle in $C$. Then each path $\mathcal{P}_J$ in $J$ can be obtained by applying the same operation to the corresponding cycle $C_J$ ($C_J\supset \mathcal{P}_J$) in $C$; therefore, each path in $J$ must be involved as one element in ${Y}$. Accordingly, ${Y}$ is sufficient to identify all tree links in $\mathcal{G}^*_{ex}$. Since ${Y}'$ is obtained by removing redundant paths in the procedure of mapping $C$ to ${Y}$, then ${Y}'$ is also sufficient to identify all tree links in $\mathcal{G}^*_{ex}$, i.e., the measurement matrices associated with $J$, ${Y}$ and ${Y}'$ have the same column rank.

In sum, for graph $\mathcal{G}_m$ with $||\mathcal{G}_m||-|\mathbb{V}|$ real links, the property of ${Y}'$ is that it contains exactly $||\mathcal{G}_m||-|\mathbb{V}|$ paths, with each path containing only tree links in $\mathcal{G}^*_{ex}$. Therefore, the measurement matrix associated with ${Y}'$ is a square matrix with full rank.

\subsection{Number of Distinct Paths by STPC}
\begin{theorem}\label{thm:number of distinct path}
The number of distinct paths constructed by STPC equals $n$, the number of links in $\mathcal{G}$.
\end{theorem}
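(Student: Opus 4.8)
The plan is to assemble the three pieces already prepared---the distinct-cycle count, the removal of virtual links, and a monitor-shortening step---into one counting argument, and then to add the contribution of the non-tree links separately.

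First I would invoke the distinct-cycle lemma of Section~\ref{sec_numDisCycl}, which gives $|C|=||\mathcal{G}_m||$ with all cycles in $C$ linearly independent, together with the analysis of Section~\ref{sec_NumLinearIndePaths}: mapping $C$ to $Y$ (deleting virtual links and the resulting isolated nodes) and then discarding the $|\mathbb{V}|=7+N_1+N_2$ redundant paths---the trivial topology, the three families of linearly dependent paths, the three families of duplicate paths, and the $N_1+N_2$ extra virtual-link cases---yields a path set $Y'$ with $|Y'|=||\mathcal{G}_m||-|\mathbb{V}|$ whose measurement matrix over the tree links of $\mathcal{G}^*_{ex}$ is square and of full rank. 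Since $\mathcal{G}_m$ has exactly $||\mathcal{G}_m||-|\mathbb{V}|$ real links and these are precisely the tree links of $\mathcal{G}$ (every real link of $\mathcal{G}^*_{ex}$ being a link of $\mathcal{G}$), $Y'$ is a linearly independent family of paths that identifies all tree links of $\mathcal{G}$.

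Second I would carry out the shortening. For each $\mathcal{Y}'_i\in Y'$ that still contains an internal monitor, replace it by the two-monitor subpath $\mathcal{Z}_i$ obtained by truncating each side at the first monitor encountered---exactly the path STPC actually measures. I would then argue: (a) this assignment is a bijection between $Y'$ and $\{\mathcal{Z}_i\}$, so $\{\mathcal{Z}_i\}$ consists of $||\mathcal{G}_m||-|\mathbb{V}|$ distinct paths; (b) $\{\mathcal{Z}_i\}$ still identifies all tree links of $\mathcal{G}^*_{ex}$, because each monitor-to-monitor segment removed by a truncation is itself available as, or a linear combination of, shorter paths already in the collection, so the column rank over the tree links is unchanged---being equal to the number of tree links it is full, which forces $\{\mathcal{Z}_i\}$ to be linearly independent; and (c) $\{\mathcal{Z}_i\}$ coincides with the set of distinct paths STPC emits for tree-link identification, since every such STPC path is one of these two-monitor truncations of a pair of internally vertex disjoint $\mathcal{X}^{(\cdot)}$-paths, and conversely. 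Hence, for tree-link identification, STPC produces exactly $||\mathcal{G}_m||-|\mathbb{V}|$ distinct paths, i.e., as many as there are tree links in $\mathcal{G}$.

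Third I would handle the non-tree links: by construction the auxiliary routine of STPC attaches exactly one path to each link $l$ of $\mathcal{G}$ lying in no $\mathcal{T}_j$, and that path contains $l$ whereas no tree-link path does, so these paths are pairwise distinct and distinct from all tree-link paths. Adding the two counts, the number of distinct STPC paths equals (number of tree links)$\,+\,$(number of non-tree links)$\,=\,|L|=n$, which also completes the proof of Theorem~IV.2 of~\cite{LiangPathConstruction12}. The main obstacle is step two, specifically sub-claims (b) and (c): one must verify that cutting a multi-monitor path of $Y'$ down to the two-monitor path STPC measures neither loses identification power nor creates coincidences beyond those already removed in passing from $C$ to $Y'$, which requires carefully matching the operational definition of an STPC path segment (each terminating at the first monitor it encounters) against the cycle-based accounting of Sections~\ref{sec_numDisCycl}--\ref{sec_NumLinearIndePaths}.
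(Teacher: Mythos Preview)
Your plan is correct and follows the paper's own route almost step for step: use the cycle count $|C|=||\mathcal{G}_m||$, pass to $Y'$ with $||\mathcal{G}_m||-|\mathbb V|$ linearly independent paths, shorten to match the STPC outputs while preserving rank, and finish with one path per non-tree link. The one place your sketch is looser than the paper is the shortening when the centre node $v$ is itself a monitor $\mu^*$ outside the special set: there STPC emits the three \emph{single} segments $\mathcal{S}^*_1,\mathcal{S}^*_2,\mathcal{S}^*_3$, whereas your ``truncate each side at the first monitor'' produces the three unions $\mathcal{S}^*_i\cup\mathcal{S}^*_j$ (still with $\mu^*$ internal); the paper handles this with a second reduction (its step~3) after the first-monitor truncation (its step~2), which is exactly the careful matching you flag as the main obstacle.
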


\begin{proof}
We have proved that the corresponding measurement matrix of ${Y}'$ is square and sufficient to identify all tree links in $\mathcal{G}^*_{ex}$. In this section, we prove that all paths with more than 2 monitors in ${Y}'$ can be shortened to form a new path set ${Z}$ which can cover all the paths obtained by STPC for tree link identification, i.e., paths obtained by line 1--11 in STPC algorithm.

1) First, we consider the number of generated paths wrt to node $v$ (line 5--9 in STPC).

(i) If $v$ is a monitor in $\{{\mu}_2,{\mu}_a,{\mu}_b,{\mu}_{g1},\cdots,{\mu}_{gN_1},{\mu}_{r1},\cdots,{\mu}_{rN_2}\}$ (see Fig.~\ref{Fig:OtherVirtualLinks}), then one path in $\{\mathcal{P}_{v1},\mathcal{P}_{v2},\mathcal{P}_{v3}\}$ contains no links, resulting to be an invalid path and discarded without appending to $\mathbf{R}$ in line 10. In this case, only two paths are generated in line 6. In 2) and 4) of Section \ref{sec_NumLinearIndePaths}, we have shown that one linearly dependent path, which is the combination of the other two paths wrt the same node, has been removed, thus not existed in ${Y}'$. Therefore, in ${Y}'$, there are also two paths wrt to nodes belonging to $\{{\mu}_2,{\mu}_a,{\mu}_b,{\mu}_{g1},\cdots,{\mu}_{gN_1},{\mu}_{r1},\cdots,{\mu}_{rN_2}\}$.

(ii) If $v$ is ${\mu}_1$, then line 6 of STPC only generates one path, since the other two paths contain only one node, i.e., ${\mu}_1$ itself. In this case, we have also shown that only path (${\mu}_1e_1{\mu}_2$ colored as blue in Fig.~\ref{Fig:DupPath12}) wrt ${\mu}_1$ is retained in ${Y}'$ after removing one invalid (Section \ref{sec_NumLinearIndePaths}-1)) and one duplicate (Section \ref{sec_NumLinearIndePaths}-3)-(i)) path.

(iii) Apart from the above two cases, each node in $\mathcal{G}$ corresponds to 3 measurement paths both in ${Y}'$ and $\mathcal{P}_{STPC}$, where $\mathcal{P}_{STPC}$ is the path set by STPC after executing line 1--11.
\begin{figure}[tb]
\centering
\includegraphics[width=2.3in]{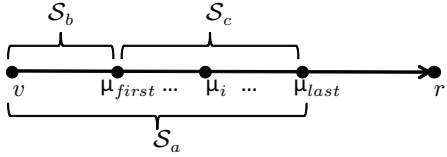}
\caption{Shortening a long path in ${Y}'$.} \label{Fig:LongShortPaths}
\end{figure}

2) Second, we compare the differences between the two path segments obtained by the two methods wrt to node $v$ on the same tree $\mathcal{T}_i$. Let $\mathcal{S}_a$ and $\mathcal{S}_b$ denote the corresponding path segments associated with ${Y}'$ and $\mathcal{P}_{STPC}$, respectively. Both $\mathcal{S}_a$ and $\mathcal{S}_b$ start at $v$ and go toward $r$ along the same tree $\mathcal{T}_i$. The only difference is that $\mathcal{S}_a$ terminates at the last monitor before traversing a virtual link and $\mathcal{S}_b$ terminates at the first monitor it encounters (as shown in Fig.~\ref{Fig:LongShortPaths}). In Fig.~\ref{Fig:LongShortPaths}, observe that wrt ${\mu}_{first}$, $\mathcal{S}_c$ is constructed and $W_{\mathcal{S}_c}$ can be computed through measuring the paths constructed wrt ${\mu}_{first}$ in ${Y}'$. Therefore, when constructing path segment $\mathcal{S}_a$ wrt $v$, there is no need to terminate at ${\mu}_{last}$. $\mathcal{S}_a$ can also terminate at ${\mu}_{first}$ since $W_{\mathcal{S}_c}$ can be known from other path measurements directly (there are only two monitors on $\mathcal{S}_c$) or indirectly (in Fig.~\ref{Fig:LongShortPaths}, $W_{{{\mu}_i\rightarrow {\mu}_{last}}}$ and $W_{{{\mu}_{first}\rightarrow {\mu}_{i}}}$ are obtained from paths constructed wrt ${\mu}_i$ and ${\mu}_{first}$, respectively). Using this operation, each path with more than 2 monitors in ${Y}'$ can be shortened.

3) Let ${Y}''$ denote the path set obtained from ${Y}'$ after the above shortening process. We can show in set ${Y}''$, for a monitor ${\mu}^*$ which is not in $\{{\mu}_2,{\mu}_a,{\mu}_b,{\mu}_{g1},\cdots,{\mu}_{gN_1},{\mu}_{r1},\cdots,{\mu}_{rN_2}\}$, the corresponding measurement paths can be further shortened. Let $\mathcal{S}^*_1$, $\mathcal{S}^*_2$ and $\mathcal{S}^*_3$ denote the corresponding shortened path segments wrt ${\mu}^*$, then the associated measurement paths in ${Y}''$ are $\mathcal{S}^*_1\cup \mathcal{S}^*_2$, $\mathcal{S}^*_2\cup \mathcal{S}^*_3$ and $\mathcal{S}^*_3\cup \mathcal{S}^*_1$, which can be further shortened to $\mathcal{S}^*_1$, $\mathcal{S}^*_2$ and $\mathcal{S}^*_3$, respectively, forming the same 3 paths as those obtained by STPC. Suppose the newly formed set from ${Y}''$ by path shortening is ${Z}$, then ${Z}$ is also sufficient to identify all tree links in $\mathcal{G}_m$. This is because, from the perspective of linear algebra, this shortening operation means that one part of the original linear equation can be eliminated by subtracting the linear combinations of some other rows; therefore, the resulting matrix rank is the same as the original matrix. Since we have proved that the measurement matrix associated with ${Y}'$ has a full rank, the measurement matrix associated with ${Z}$ also has a full rank.

To identify tree links in $\mathcal{G}^*_{ex}$, based on the above three arguments, we know that the number of paths constructed wrt $v$ are the same in both ${Y}'$ and $\mathcal{P}_{STPC}$. Moreover, wrt the same node and along the same independent spanning tree(s), the paths obtained by these two methods are exactly the same after the path shortening operation. Thus, the newly generated path set ${Z}$ can cover all the paths selected by STPC for tree link identification. The number of paths in ${Z}$ is the number of tree links in $\mathcal{G}^*_{ex}$; therefore, the number of distinct paths constructed by line 1--11 in STPC is exactly the number of tree links in $\mathcal{G}^*_{ex}$.

Finally, we consider the non-tree links of the original graph $\mathcal{G}$. As $\mathcal{G}_m=\mathcal{T}_1 \cup \mathcal{T}_2 \cup \mathcal{T}_3$, all non-tree links are involved in set $L(\mathcal{G}\setminus \mathcal{G}_m)$. With the knowledge of tree link metrics, based on the auxiliary algorithm of STPC for non-tree link identification, in line 12--15, each non-tree link $l$ (with link metric $W_l$) corresponds to one new path measurement involving $W_l$ as the only unknown link metric. Therefore, the number of newly added paths by line 12--15 equals the number of non-tree links in $\mathcal{G}$ and they are linearly independent with all other selected paths. A link in $\mathcal{G}$ is either a tree or non-tree link; therefore, the number of distinct paths constructed by STPC equals the number of links in $\mathcal{G}$.
\end{proof}

\section{Performance Evaluations on Random Graphs}
Besides the main simulation results in \cite{LiangPathConstruction12}, we also simulate random graphs with a different number of links and observe similar results in Table \ref{t sparse}.

\begin{table*}[tb]
\renewcommand{\arraystretch}{1.3}
\caption{Sparsely-Connected Random Graphs (ER: $p=0.0390$, RG: $d_c=0.11943$, BA: $\varrho=3$, $I_{\mbox{\tiny max}}=3\times n$)} \label{t sparse}
\vspace{-.5em}
\centering
\begin{tabular}{c|c|c|c|c|c|c|c|c|c|c|c}
  \hline
   graph & $\overline{n}$  & ${m}$ & $\overline{\kappa}$ & $r_{\mbox{\small succ}}$ & ${\Upsilon}$ & $\tSTPC$ (s)& $\tRWPC$ (s)& $\tSTLI$ (ms)& $\tMILI$ (ms)& $\hSTPC$& $\hRWPC$ \\
  \hline
ER	&	438.48	& 150 &	9.39	&	99.00\%	&	99.76\%	&	10.21	&	38.73	&	 5.27	&	17.8	&	17.94	&	 14.2	 \\
\hline																					
RG	&	449.02	& 150 &	14.96	&	4.00\%	&	93.91\%	&	10.62	&	109.21	&	 5.10	&	23.34	&	22.48	&	 14.43	 \\
\hline																					
BA	&	441	& 150 &	3	&	72.00\%	&	99.70\%	&	8.41	&	90.48	&	5.39	 &	 20.45	&	15.08	&	8.85	 \\
\hline																																			 
\end{tabular}
\vspace{-0mm}
\end{table*}

\bibliographystyle{IEEEtran}
\bibliography{mybibSimplified}

\end{document}